  \theoremstyle{definition}
  \newtheorem{defn}{\protect\definitionname}
\theoremstyle{plain}
\newtheorem{thm}{\protect\theoremname}
  \theoremstyle{plain}
  \newtheorem{lem}{\protect\lemmaname}
  \providecommand{\definitionname}{Definition}
  \providecommand{\lemmaname}{Lemma}
\providecommand{\theoremname}{Theorem}
\begin{document}

\title{Robust 1-bit Compressive Sensing via Gradient Support Pursuit}

\author{Sohail~Bahmani,~\IEEEmembership{Student,~IEEE}%
\thanks{S.B. is with Department of Electrical and Computer Engineering, Carnegie
Mellon University, 5000 Forbes Ave., Pittsburgh, PA, 15213 e-mail:\texttt{ sbahmani@cmu.edu}%
}, Petros~T.~Boufounos,~\IEEEmembership{Member,~IEEE}%
\thanks{P.T.B. is with Mitsubishi Electric Research Laboratory, 201 Broadway,
Cambridge, MA 02139 e-mail:\texttt{ petrosb@merl.com}%
}, Bhiksha~Raj%
\thanks{B.R. is with Language Technologies Institute, Carnegie Mellon University,
5000 Forbes Ave., Pittsburgh, PA, 15213 e-mail: \texttt{bhiksha@cs.cmu.edu}%
}}

\maketitle
\IEEEpeerreviewmaketitle
\begin{abstract}
This paper studies a formulation of 1-bit Compressive Sensing (CS)
problem based on the maximum likelihood estimation framework. In order
to solve the problem we apply the recently proposed Gradient Support
Pursuit algorithm, with a minor modification. Assuming the proposed
objective function has a Stable Restricted Hessian, the algorithm
is shown to accurately solve the 1-bit CS problem. Furthermore, the
algorithm is compared to the state-of-the-art 1-bit CS algorithms
through numerical simulations. The results suggest that the proposed
method is robust to noise and at mid to low input SNR regime it achieves
the best reconstruction SNR vs. execution time trade-off.\end{abstract}
\begin{IEEEkeywords}
1-bit compressive sensing, sparsity, quantization
\end{IEEEkeywords}
\global\long\def\bbone{1\hspace{-0.6ex}\mathbf{l}}
\global\long\def\vc#1{\mathbf{#1}}
\global\long\def\mx#1{\mathbf{#1}}
\global\long\def\st#1{\mathcal{#1}}
\newcommand\norms[2][]{%
{\ifx&#1&%
#2
\else
#2_{#1}
\fi}
}\newcommandx\norm[2][usedefault, addprefix=\global, 1=]{\norms[#1]{\left\Vert #2\right\Vert }}
\global\long\def\res#1{\left.#1\right|}
\global\long\def\herm{^{\mspace{-1mu}\mathsf{H}}}
\global\long\def\cmpl{^{^{\mathrm{c}}}}
\global\long\def\tran{^{\mathrm{T}}}
\global\long\def\dx{\mathrm{d}}
\global\long\def\Arg{\text{Arg}}
\global\long\def\supp{\mathrm{supp}}
\global\long\def\vcg#1{\boldsymbol{#1}}
\global\long\def\tr{\mathrm{tr}}
\global\long\def\mxg#1{\boldsymbol{#1}}
\global\long\def\itr#1{^{\left(#1\right)}}
\global\long\def\Breg#1#2#3{\mathrm{B}_{#1}\left(#2\parallel#3\right)}
\global\long\def\bsym#1{\boldsymbol{#1}}
\global\long\def\sgn{\mathrm{sgn}}

\section{Introduction\label{sec:1bit-Intro}}

\IEEEPARstart{Q}{uantization} is an indispensable part of digital
signal processing and digital communications systems. To incorporate
Compressive Sensing (CS) methods in these systems, it is thus necessary
to analyze and evaluate them considering the effect of measurement
quantization. In the recent years there is a growing interest in quantized
CS in the literature \cite{Laska_Finite_2009,Dai_Distortion-rate_2009,Sun_Optimal_2009,Zymnis_Compressed_2010,Jacques_Dequantizing_2011,Laska_Democracy_2011},
particularly the extreme case of quantization to a single bit, known
as 1-bit Compressive Sensing \cite{Boufounos_1-bit_2008}, where only
the sign of the linear measurements are recorded. The advantage of
this acquisition scheme is that it can be implemented using simple
hardware that is not expensive and can operate at very high sampling
rates. The formulation of this problem is also very similar to the
sparse logistic regression, very useful in machine learning applications
\cite{Plan_Robust_2013}.

As in standard CS, the algorithms proposed for the 1-bit CS problem
can be categorized into convex methods and non-convex greedy methods.
In \cite{Boufounos_1-bit_2008} an algorithm is proposed for 1-bit
CS that induces sparsity through the $\ell_{1}$-norm while penalizes
inconsistency with the 1-bit sign measurements via a convex regularization
term. In a noise-free scenario, the 1-bit measurements do not convey
any information about the length of the signal. Therefore, the algorithm
of \cite{Boufounos_1-bit_2008}, as well as other 1-bit CS algorithms,
aim at accurate estimation of the normalized signal. Requiring the
1-bit CS estimate to lie on the surface of the unit-ball imposes a
non-convex constraint in methods that perform an (approximate) optimization,
even those that use the convex $\ell_{1}$-norm to induce sparsity.
Among greedy 1-bit CS algorithms, an algorithm called Matching Sign
Pursuit (MSP) is proposed in \cite{Boufounos_Greedy_2009} based on
the CoSaMP algorithm \cite{Needell_CoSaMP_2009}. This algorithm is
empirically shown to perform better than the standard CoSaMP algorithm
for estimation of the normalized sparse signal. In \cite{Laska_Trust_2011}
the Restricted-Step Shrinkage (RSS) algorithm is proposed for 1-bit
CS problems, improving the performance of $\ell_{1}$-based algorithms.
This algorithm, which is similar to \emph{trust-region} algorithms
in non-convex optimization, is shown to converge to a stationary point
of the objective function regardless of the initialization. More recently,
\cite{Jacques_Robust_2013} derived a lower bound on the best achievable
reconstruction error of any 1-bit CS algorithm in noise-free scenarios.
Furthermore, using the notion of ``binary stable embeddings'', they
have shown that Gaussian measurement matrices can be used for 1-bit
CS problems both in noisy and noise-free regime. The Binary Iterative
Hard Thresholding (BIHT) algorithm is also proposed in \cite{Jacques_Robust_2013}
and shown to have favorable performance compared to the RSS and MSP
algorithms through numerical simulations. For robust 1-bit CS in presence
of noise, \cite{Yan_Robust_2012} also proposed the Adaptive Outlier
Pursuit (AOP) algorithm. In each iteration of the AOP, first the sparse
signal is estimated similar to BIHT with the difference that the potentially
corrupted measurements are excluded. Then with the new signal estimate
fixed, the algorithm updates the list of likely corrupted measurements.
The AOP is shown to improve on performance of BIHT through numerical
simulations. \cite{Plan_One-bit_2011} proposed a linear program to
solve the 1-bit CS problems in a noise-free scenario. The algorithm
is proved to provide accurate solutions, albeit using a sub-optimal
number of measurements. Furthermore, in \cite{Plan_Robust_2013} a
convex program is proposed that is robust to noise in 1-bit measurements
and achieves the optimal number of measurements and in \cite{Ai_One-bit_NG_2013}
it is shown that non-Gaussian matrices can be used for acquisition
under certain conditions on the acquired signal.

In this paper, we formulate the 1-bit CS problem as an sparsity-constrained
optimization. As described in Section \ref{sec:1bit-Formulation},
the objective function is obtained by adjusting the loss function
that arises in the Maximum Likelihood Estimation (MLE) formulation
of the problem. To solve this optimization problem in Section \ref{sec:1bit-Algorithm}
we propose a slightly modified version of the Gradient Support Pursuit
(GraSP) algorithm proposed in \cite{Bahmani_Greedy_2012}. In Section
\ref{sec:1bit-AccyGrnty}, the algorithm is shown to yield an approximate
solution provided that the objective function satisfies certain sufficient
conditions. Furthermore, in Section \ref{sec:Simulations}, we compare
the performance of our algorithm with the BIHT algorithm and the non-convex
variant of the 1-bit CS solver introduced in \cite{Plan_Robust_2013}
through numerical simulations. As an aside, we show that the non-convex
solver described in \cite{Plan_Robust_2013} has an explicit solution
(see Appendix \ref{apndx:PV0}). Finally, we discuss and conclude
in Section \ref{sec:Conclusion}.

\paragraph*{Notation}

In the remainder of the paper we denote the positive part of a real
number $x$ by $\left(x\right)_{+}$. For a positive integer $k$,
the set $\left\{ 1,2,\ldots,k\right\} $ is denoted by $\left[k\right]$.
The indicator function is denoted by $\bbone\left(\cdot\right)$.Vectors
and matrices are denoted by boldface characters. With the exception
of $\st N$ which we use to denote the normal distribution as in $\st N\left(0,1\right)$,
calligraphic letters are reserved for denoting sets. The support set
(i.e., the set of non-zero coordinates) of a vector $\vc x$ is denoted
by $\supp\left(\vc x\right)$. The best $k$-term approximation of
a vector $\vc v$ is denoted by $\vc v_{k}$. Depending on the context
$\vc v_{\left[k\right]}$ may also be a $k$-dimensional vector denoting
the restriction of $\vc v$ to its $k$ largest entries in magnitudes
(i.e., truncated best $k$-term approximation of $\vc v$). Restriction
of an $n$-dimensional vector $\vcg v$ to its entries corresponding
to an index set $\st I\subseteq\left[n\right]$ is denoted by $\res{\vc v}_{\st I}$.
Similarly $\mx A_{\st I}$ denotes the restriction of a matrix $\mx A$
to the columns enumerated by $\st I$. Restriction of the identity
matrix to the columns enumerated by $\st I$ is particularly denoted
by $\mx P_{\st I}$. The operator norm of $\mx A$ with respect to
the $\ell_{2}$-norm is denoted by $\norm A$. For square matrices
$\mx A$ and $\mx B$ we write $\mx B\preccurlyeq\mx A$ to state
that $\mx A-\mx B$ is positive semidefinite. We use $\nabla f\left(\cdot\right)$
and $\nabla^{2}f\left(\cdot\right)$ to denote the gradient and the
Hessian of a twice continuously differentiable function $f:\mathbb{R}^{n}\mapsto\mathbb{R}$.
For an index set $\st I\subset\left[n\right]$, the restriction of
the gradient to the entries selected by $\st I$ and the restriction
of the Hessian to the entries selected by $\st I\times\st I$ are
denoted by $\nabla_{\st I}f\left(\cdot\right)$ and $\nabla_{\st I}^{2}f\left(\cdot\right)$,
respectively. Finally, numerical superscripts within parentheses denote
the iteration index.

\section{Problem Formulation\label{sec:1bit-Formulation}}

We cast the 1-bit CS problem in the framework of statistical parametric
estimation which is also considered in \cite{Zymnis_Compressed_2010}.
In 1-bit CS, binary measurements $y\in\left\{ \pm1\right\} $ of a
signal $\vc x^{\star}\in\mathbb{R}^{n}$ are collected based on the
model 
\begin{alignat}{1}
y & =\text{sgn}\left(\left\langle \vc a,\vc x^{\star}\right\rangle +e\right),\label{eq:Model}
\end{alignat}
 where $\vc a$ is a measurement vector and $e$ denotes an additive
noise with distribution $\st N\negmedspace\left(0,\!\sigma^{2}\right)$.
It is straightforward to show the conditional likelihood of $y$ given
$\vc a$ and signal $\vc x$ can be written as
\begin{alignat*}{1}
\Pr\left\{ y\mid\vc a;\vc x\right\}  & =\Phi\left(y\frac{\left\langle \vc a,\vc x\right\rangle }{\sigma}\right),
\end{alignat*}
 with $\Phi\left(\cdot\right)$ denoting the standard normal cumulative
distribution function (CDF). Then, for measurement pairs $\left\{ \left(\vc a_{i},y_{i}\right)\right\} _{i=1}^{m}$the
MLE loss function is given by 
\begin{alignat*}{1}
f_{\mbox{\tiny{MLE}}}\left(\vc x\right) & :=-\frac{1}{m}\sum_{i=1}^{m}\log\left(\Phi\left(y_{i}\frac{\left\langle \vc a_{i},\vc x\right\rangle }{\sigma}\right)\right).
\end{alignat*}
 The estimator should exploit the sparsity of the solution and sparsely
minimize the function above. Note, however, that at high Signal-to-Noise
Ratio (SNR) this function does not lend itself easily to optimization.
To observe this behavior, rewrite $f_{\tiny{\mbox{MLE}}}$ as 
\begin{alignat*}{1}
f_{\tiny{\mbox{MLE}}}\left(\vc x\right) & =\frac{1}{m}\sum_{i=1}^{m}g_{\eta}\left(y_{i}\left\langle \vc a_{i},\frac{\vc x}{\norm[2]{\vc x^{\star}}}\right\rangle \right),
\end{alignat*}
where $\eta:=\frac{\norm[2]{\vc x^{\star}}}{\sigma}$ is the SNR and
$g_{\omega}\left(t\right):=-\log\Phi\left(\omega t\right)$ for all
$\omega\geq0$. As $\eta\to+\infty$ the function $g_{\eta}\left(t\right)$
tends to
\begin{alignat*}{1}
g_{\infty}\left(t\right) & :=\begin{cases}
0 & t>0\\
\log2 & t=0\\
+\infty & t<0
\end{cases}.
\end{alignat*}
 Therefore, as the SNR increases to infinity $f_{\mbox{\tiny{MLE}}}\left(\vc x\right)$
tends to a sum of discontinuous constant functions that do not uniquely
identify the solution and are difficult to handle in practice. This
is essentially the same problem as the amplitude ambiguity demonstrated
in the original formulation in \cite{Boufounos_1-bit_2008}. Furthermore,
whether the noise level is too low or the signal is too strong relative
to the noise, in a high (but finite) SNR scenario the measurement
vectors are likely to be consistent with the noise-free measurements
of the true signal $\vc x^{\star}$. In these cases, $f_{\tiny{\mbox{MLE}}}\left(t\vc x^{\star}\right)$
can be made arbitrarily close to the zero lower bound as $t\to+\infty$.
Therefore, $f_{\tiny{\mbox{MLE}}}$ would not have a bounded minimizer.
This can be interpreted as an infinite estimation error.

To avoid the problems mentioned above we consider a modified loss
function

\begin{alignat}{1}
f_{0}\left(\vc x\right) & :=-\frac{1}{m}\sum_{i=1}^{m}\log\left(\Phi\left(y_{i}\left\langle \vc a_{i},\vc x\right\rangle \right)\right),\label{eq:1bitLoss}
\end{alignat}
while we merely use an alternative formulation of (\ref{eq:Model})
given by 
\begin{alignat*}{1}
y & =\text{sgn}\left(\eta\left\langle \vc a,\vc x^{\star}\right\rangle +e\right),
\end{alignat*}
in which $\eta>0$ denotes the true SNR, $\vc x^{\star}$ is assumed
to be unit-norm, and $e\sim\st N\left(0,1\right)$. The aim is accurate
estimation of the unit-norm signal $\vc x^{\star}$ which is assumed
to be $s$-sparse. Disregarding computational complexity, the candidate
estimator would be
\begin{alignat}{1}
\arg\min_{\vc x} & \ f_{0}\left(\vc x\right)\quad\text{s.t. }\norm[0]{\vc x}\leq s\text{ and }\norm[2]{\vc x}\leq1.\label{eq:BMLE}
\end{alignat}
However, finding the exact solution (\ref{eq:BMLE}) may be computationally
intractable, thereby we merely focus on approximate solutions to this
optimization problem.

\section{Algorithm\label{sec:1bit-Algorithm}}

In this section we introduce a modified version of the GraSP algorithm,
outlined in Algorithm \ref{alg:1bit-GraSP}, for estimation of bounded
sparse signals associated with a cost function. While in this paper
the main goal is to study the 1-bit CS problem and in particular the
objective function described by (\ref{eq:1bitLoss}), we state performance
guarantees of Algorithm \ref{alg:1bit-GraSP} in more general terms.
As in GraSP, in each iteration first the $2s$ coordinates at which
the gradient of the cost function at the iterate $\vc x\itr t$ has
the largest magnitudes are identified. These coordinates, denoted
by $\st Z$, are then merged with the support set of $\vc x\itr t$
to obtain the set $\st T$ in the second step of the iteration. Then,
as expressed in line \ref{lin:crude} of Algorithm \ref{alg:1bit-GraSP},
a crude estimate $\vc b$ is computed by minimizing the cost function
over vectors of length no more than $r$ whose supports are subsets
of $\st T$. Note that this minimization would be a convex program
and therefore tractable, provided that the sufficient conditions proposed
in Section \ref{sec:1bit-AccyGrnty} hold. In the final step of the
iteration (i.e., line \ref{lin:prune}) the crude estimate is pruned
to its best $s$-term approximation to obtain the next iterate $\vc x\itr{t+1}$.
By definition we have $\norm[2]{\vc b}\leq r$, thus the new iterate
remains in the feasible set (i.e., $\norm[2]{\vc x\itr{t+1}}\leq r$).

{\centering
\begin{algorithm2e}
	\DontPrintSemicolon
	\caption{GraSP with Bounded Thresholding\label{alg:1bit-GraSP}}
	\SetKwInOut{Input}{input}
	\SetKwInOut{Output}{output}
	\Input{\begin{tabular}{ll}
			$s$ & desired sparsity level\\
			$r$ & radius of the feasible set\\
			$f\left(\cdot\right)$ & the cost function\\
			\end{tabular}}
	$t\longleftarrow 0$\;
	$\vc{x}\itr{t}\longleftarrow \vc{0}$\;
	\Repeat{halting condition holds}{
\nl\label{lin:identify}	$\st{Z} \longleftarrow \supp\left(\left[\nabla f\left(\vc{x}\itr{t}\right)\right]_{2s}\right)$\;
\nl\label{lin:merge}	$\st{T} \longleftarrow \supp\left(\vc{x}\itr{t}\right)\cup\st{Z}$\;
\nl\label{lin:crude}	$\vc{b} \longleftarrow \displaystyle{\arg\min_{\vc{x}}}\ f\left(\vc{x}\right)\quad\text{s.t. }\res{\vc{x}}_{\st{T}\cmpl}=\vc{0} \text{ and } \norm[2]{\vc{x}}\leq r$\;    
\nl\label{lin:prune}	$\vc{x}\itr{t+1} \longleftarrow \vc{b}_{s}$\;
\nl		$t\longleftarrow t+1$\;
	}
	\Return{$\vc{x}\itr{t}$}
\end{algorithm2e}}

\section{Accuracy Guarantees\label{sec:1bit-AccyGrnty}}

In order to provide accuracy guarantees for Algorithm \ref{alg:1bit-GraSP},
we rely on the notion of SRH described in \cite{Bahmani_Greedy_2012}
with a slight modification in its definition. The original definition
of SRH basically characterizes the cost functions that have bounded
curvature over sparse canonical subspaces, possibly at locations arbitrarily
far from the origin. However, we only require the bounded curvature
condition to hold at locations that are within a sphere around the
origin. More precisely, we redefine the SRH as follows.
\begin{defn}[Stable Restricted Hessian]
\label{def:SRH-bnd} Suppose that $f:\mathbb{R}^{n}\mapsto\mathbb{R}$
is a twice continuously differentiable function and let $k<n$ be
a positive integer. Furthermore, let $\alpha_{k}\left(\vc x\right)$
and $\beta_{k}\left(\vc x\right)$ be in turn the largest and smallest
real numbers such that 
\begin{alignat}{2}
\beta_{k}\left(\vc x\right)\norm[2]{\vcg{\Delta}}^{2} & \leq\vcg{\Delta}\tran\nabla^{2}f\left(\vc x\right)\vcg{\Delta} & \leq\alpha_{k}\left(\vc x\right)\norm[2]{\vcg{\Delta}}^{2},\label{eq:SRH-1}
\end{alignat}
 holds for all $\vcg{\Delta}$ and $\vc x$ that obey $\left|\supp\left(\vcg{\Delta}\right)\cup\supp\left(\vc x\right)\right|\leq k$
and $\norm[2]{\vc x}\leq r$. Then $f$ is said to have an Stable
Restricted Hessian of order $k$ with constant $\mu_{k}\geq1$ in
a sphere of radius $r>0$, or for brevity $\left(\mu_{k},r\right)$-SRH,
if $1\leq\alpha_{k}\left(\vc x\right)/\beta_{k}\left(\vc x\right)\leq\mu_{k}$
for all $k$-sparse $\vc x$ with $\norm[2]{\vc x}\leq r$.\end{defn}
\begin{thm}
\label{thm:IterationInvariant}Let $\overline{\vc x}$ be a vector
such that $\norm[0]{\overline{\vc x}}\leq s$ and $\norm[2]{\overline{\vc x}}\leq r$.
If the cost function $f\left(\vc x\right)$ have $\left(\mu_{4s},r\right)$-SRH
corresponding to the curvature bounds $\alpha_{4s}\left(\vc x\right)$
and $\beta_{4s}\left(\vc x\right)$ in (\ref{eq:SRH-1}), then iterates
of Algorithm \ref{alg:1bit-GraSP} obey 
\begin{alignat*}{1}
\norm[2]{\vc x\itr{t+1}-\overline{\vc x}} & \leq\left(\mu_{4s}^{2}-\mu_{4s}\right)\norm[2]{\vc x\itr t-\overline{\vc x}}+2\left(\mu_{4s}+1\right)\epsilon,
\end{alignat*}
 where $\epsilon$ obeys $\norm[2]{\left[\nabla f\left(\overline{\vc x}\right)\right]_{3s}}\leq\epsilon\,\beta_{4s}\left(\vc x\right)$
for all $\vc x$ with $\norm[0]{\vc x}\leq4s$ and $\norm[2]{\vc x}\leq r$.
\end{thm}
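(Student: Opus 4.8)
\emph{Proof plan.}\ The plan is to follow the iteration-invariant analysis of GraSP in \cite{Bahmani_Greedy_2012}, adapted to the bounded-thresholding variant, by splitting a single iteration into the three familiar stages --- gradient identification, crude estimation, and pruning --- and chaining the resulting bounds; the one genuinely new ingredient is handling the norm constraint in the subproblem of line~\ref{lin:crude}. Write $\vcg{\Delta}:=\vc x\itr t-\overline{\vc x}$ and let $\st Z$, $\st T$, $\vc b$ be produced from $\vc x\itr t$ by Algorithm~\ref{alg:1bit-GraSP}. Every iterate is feasible ($\vc x\itr 0=\vc 0$ and $\norm[2]{\vc x\itr{t+1}}=\norm[2]{\vc b_s}\le\norm[2]{\vc b}\le r$), so each line segment joining two of the points that arise below lies in the ball of radius $r$, where $\left(\mu_{4s},r\right)$-SRH applies. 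Since $\supp\left(\vc x\itr t\right)\subseteq\st T$ and $\left|\st T\right|\le3s$, the vector $\res{\overline{\vc x}}_{\st T}$ is feasible for line~\ref{lin:crude} and $\res{\overline{\vc x}}_{\st T\cmpl}=-\res{\vcg{\Delta}}_{\st T\cmpl}$ is supported on a set of size at most $s$. For the pruning step, $\vc x\itr{t+1}=\vc b_s$ is a best $s$-term approximation of $\vc b$ while $\res{\overline{\vc x}}_{\st T}$ is $s$-sparse, so $\norm[2]{\vc b-\vc x\itr{t+1}}\le\norm[2]{\vc b-\res{\overline{\vc x}}_{\st T}}$ and two triangle inequalities then give $\norm[2]{\vc x\itr{t+1}-\overline{\vc x}}\le2\norm[2]{\vc b-\res{\overline{\vc x}}_{\st T}}+\norm[2]{\res{\overline{\vc x}}_{\st T\cmpl}}$; it remains to bound the two terms on the right.

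For the identification stage I would establish $\norm[2]{\res{\overline{\vc x}}_{\st T\cmpl}}\le\left(\mu_{4s}-1\right)\norm[2]{\vcg{\Delta}}+2\epsilon$. Choose $\st R\supseteq\supp\left(\vc x\itr t\right)\cup\supp\left(\overline{\vc x}\right)$ with $\left|\st R\right|\le2s$; then $\vcg{\Delta}$ is supported on $\st R$ and $\norm[2]{\res{\overline{\vc x}}_{\st T\cmpl}}=\norm[2]{\res{\vcg{\Delta}}_{\st R\setminus\st T}}$. Because $\st Z$ collects the $2s$ coordinates of $\nabla f\left(\vc x\itr t\right)$ largest in magnitude, $\st R\setminus\st T\subseteq\st R\setminus\st Z$ has no more elements than $\st Z\setminus\st R$, so the entrywise comparison gives $\norm[2]{\nabla_{\st R\setminus\st T}f\left(\vc x\itr t\right)}\le\norm[2]{\nabla_{\st Z\setminus\st R}f\left(\vc x\itr t\right)}$. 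Writing $\nabla f\left(\vc x\itr t\right)-\nabla f\left(\overline{\vc x}\right)$ as the segment-averaged Hessian applied to $\vcg{\Delta}$, SRH bounds its $\left(\st R\setminus\st T\right)$-diagonal block below by the averaged lower curvature $\beta_{4s}\left(\cdot\right)$ and its off-diagonal blocks --- which act between disjoint index sets of total size at most $4s$ --- in operator norm by $\tfrac{1}{2}\left(\alpha_{4s}\left(\cdot\right)-\beta_{4s}\left(\cdot\right)\right)\le\tfrac{1}{2}\left(\mu_{4s}-1\right)\beta_{4s}\left(\cdot\right)$; meanwhile the hypothesis on $\epsilon$ controls $\norm[2]{\nabla_{\st J}f\left(\overline{\vc x}\right)}\le\norm[2]{\left[\nabla f\left(\overline{\vc x}\right)\right]_{3s}}\le\epsilon\,\beta_{4s}\left(\cdot\right)$ for every $\st J$ with $\left|\st J\right|\le3s$. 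Expanding both sides of the comparison and cancelling the common curvature factor yields the claimed bound.

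For the crude-estimate stage I would establish $\norm[2]{\vc b-\res{\overline{\vc x}}_{\st T}}\le\epsilon+\tfrac{1}{2}\left(\mu_{4s}-1\right)\norm[2]{\res{\overline{\vc x}}_{\st T\cmpl}}$. Since SRH forces $\beta_{4s}\left(\cdot\right)>0$ on the feasible ball, line~\ref{lin:crude} is a genuine convex program and $\vc b$ satisfies the variational inequality $\left\langle \nabla f\left(\vc b\right),\vc v-\vc b\right\rangle \ge0$ for all feasible $\vc v$; taking $\vc v=\res{\overline{\vc x}}_{\st T}$ gives $\left\langle \nabla_{\st T}f\left(\vc b\right),\vc b-\res{\overline{\vc x}}_{\st T}\right\rangle \le0$ whether or not $\vc b$ lies on the sphere --- this is what replaces the first-order stationarity used in \cite{Bahmani_Greedy_2012}. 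Combining it with restricted strong convexity, $\left\langle \nabla f\left(\vc b\right)-\nabla f\left(\res{\overline{\vc x}}_{\st T}\right),\vc b-\res{\overline{\vc x}}_{\st T}\right\rangle \ge\beta_{4s}\left(\cdot\right)\norm[2]{\vc b-\res{\overline{\vc x}}_{\st T}}^{2}$ (legitimate since $\vc b-\res{\overline{\vc x}}_{\st T}$ is supported on $\st T$ and the whole segment stays feasible), isolates $\norm[2]{\vc b-\res{\overline{\vc x}}_{\st T}}\le\norm[2]{\nabla_{\st T}f\left(\res{\overline{\vc x}}_{\st T}\right)}/\beta_{4s}\left(\cdot\right)$. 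Finally, passing from $\nabla f\left(\res{\overline{\vc x}}_{\st T}\right)$ to $\nabla f\left(\overline{\vc x}\right)$ costs an averaged Hessian applied to $-\res{\overline{\vc x}}_{\st T\cmpl}$, whose restriction to $\st T$ is again an off-diagonal block and is therefore bounded by $\tfrac{1}{2}\left(\mu_{4s}-1\right)\beta_{4s}\left(\cdot\right)\norm[2]{\res{\overline{\vc x}}_{\st T\cmpl}}$, while $\norm[2]{\nabla_{\st T}f\left(\overline{\vc x}\right)}\le\norm[2]{\left[\nabla f\left(\overline{\vc x}\right)\right]_{3s}}\le\epsilon\,\beta_{4s}\left(\cdot\right)$ because $\left|\st T\right|\le3s$.

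Chaining the pieces, the crude-estimate bound turns the pruning bound into $\norm[2]{\vc x\itr{t+1}-\overline{\vc x}}\le2\epsilon+\mu_{4s}\norm[2]{\res{\overline{\vc x}}_{\st T\cmpl}}$, and then the identification bound gives $\norm[2]{\vc x\itr{t+1}-\overline{\vc x}}\le\mu_{4s}\left(\mu_{4s}-1\right)\norm[2]{\vcg{\Delta}}+2\left(\mu_{4s}+1\right)\epsilon$, which is exactly the asserted inequality. I expect the crude-estimate stage to be the main obstacle: besides getting the variational inequality to do the work of stationarity without assuming where on the ball $\vc b$ sits, the real burden --- shared with the identification stage --- is the bookkeeping of Hessian block norms passed through mean-value integrals, in particular keeping track of the point at which each $\alpha_{4s}\left(\cdot\right)$ and $\beta_{4s}\left(\cdot\right)$ is evaluated and invoking the SRH inequality $\alpha_{4s}\left(\vc x\right)\le\mu_{4s}\beta_{4s}\left(\vc x\right)$ so that it survives integration and the cancellations leave precisely the factors $\mu_{4s}^{2}-\mu_{4s}$ and $2\left(\mu_{4s}+1\right)$.
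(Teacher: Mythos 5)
Your proposal is correct and follows essentially the same route as the paper: the same three-stage decomposition, the same two intermediate bounds (the paper's Lemmas \ref{lem:VComplement} and \ref{lem:ErrorWRTb}, whose proofs it borrows from \cite{Bahmani_Greedy_2012}), and the same chaining via $\norm[2]{\vc x\itr{t+1}-\overline{\vc x}}\leq2\norm[2]{\vc b-\res{\overline{\vc x}}_{\st T}}+\norm[2]{\res{\overline{\vc x}}_{\st T\cmpl}}$. Your explicit treatment of the variational inequality $\left\langle \nabla f\left(\vc b\right),\vc v-\vc b\right\rangle \geq0$ in place of stationarity is exactly the "straightforward change" the paper invokes but leaves implicit for the norm-constrained subproblem.
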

The immediate implication of this theorem is that if the 1-bit CS
loss $f_{0}\left(\vc x\right)$ has $\left(\mu_{4s},1\right)$-SRH
with $\mu_{4s}\leq\frac{1+\sqrt{3}}{2}$ then we have $\norm[2]{\vc x\itr t-\vc x^{\star}}\leq2^{-t}\norm[2]{\vc x^{\star}}+2\left(3+\sqrt{3}\right)\epsilon$. 

Proof of Theorem \ref{thm:IterationInvariant} is almost identical
to the proof of Theorem 1 in \cite{Bahmani_Greedy_2012}. For brevity
we will provide a proof sketch in Appendix \ref{apndx:1-bitCS} and
elaborate only on the more distinct parts of the proof and borrow
the remaining parts from \cite{Bahmani_Greedy_2012}.

\section{Simulations\label{sec:Simulations}}

In our simulations using synthetic data we considered signals of dimensionality
$n=1000$ that are $s$-sparse with $s=10,20,$ or $30$. The non-zero
entries of the signal constitute a vector randomly drawn from the
surface of the unit Euclidean ball in $\mathbb{R}^{s}$. The $m\times n$
measurement matrix has iid standard Gaussian entries with $m$ varying
between 100 and 2000 in steps of size 100. We also considered three
different noise variances $\sigma^{2}$ corresponding to input SNR
$\eta=20$dB, $10$dB, and $0$dB. Figures \ref{fig:AE}--\ref{fig:ET}
illustrate the average performance of the considered algorithm over
200 trials versus the sampling ratio (i.e., $m/n$). In these figures,
the results of Algorithm \ref{alg:1bit-GraSP} considering $f_{0}$
and $f_{\tiny{MLE}}$ as the objective function are demarcated by
GraSP and GrasP-$\eta$, respectively. Furthermore, the results corresponding
to BIHT algorithm with one-sided $\ell_{1}$ and $\ell_{2}$ objective
functions are indicated by BIHT and BIHT-$\ell_{2}$, respectively.
We also considered the $\ell_{0}$-constrained optimization proposed
by \cite{Plan_Robust_2013} which we refer to as PV-$\ell_{0}$. While
\cite{Plan_Robust_2013} mostly focused on studying the convex relaxation
of this method using $\ell_{1}$-norm, as shown in Appendix \ref{apndx:PV0}
the solution to PV-$\ell_{0}$ can be derived explicitly in terms
of the one-bit measurements, the measurement matrix, and the sparsity
level. We do not evaluate the convex solver proposed in \cite{Plan_Robust_2013}
because we did not have access to an efficient implementation of this
method. Furthermore, this convex solver is expected to be inferior
to PV-$\ell_{0}$ in terms of accuracy because it operates on a feasible
set with larger \emph{mean width} \cite[Theorem 1.1]{Plan_Robust_2013}.
With the exception of the non-iterative PV-$\ell_{0}$, the other
four algorithms considered in our simulations are iterative; they
are configured to halt when they produce an estimate whose 1-bit measurements
and the real 1-bit measurements have a Hamming distance smaller than
an $\eta$-dependent threshold.

Figure \ref{fig:AE} illustrates performance of the considered algorithms
in terms of the angular error between the normalized estimate $\widehat{\vc x}$
and the true signal $\vc x^{\star}$ defined as $\mathrm{AE}\left(\widehat{\vc x}\right):=\frac{1}{\pi}\cos^{-1}\left\langle \widehat{\vc x},\vc x^{\star}\right\rangle $.
As can be seen from the figure, with higher input SNR (i.e., $\eta$)
and less sparse target signals the algorithms incur larger angular
error. While there is no significant difference in performance of
GaSP, GraSP-$\eta$, and BIHT-$\ell_{2}$ for the examined values
of $\eta$ and $s$, the BIHT algorithm appears to be sensitive to
$\eta$. At $\eta=20$dB and low sampling ratios BIHT outperforms
the other methods by a noticeable margin. However, for more noisy
measurements BIHT loses its advantage and at $\eta=0$dB it performs
even poorer than the PV-$\ell_{0}$. PV-$\ell_{0}$ never outperforms
the two variants of GraSP or the BIHT-$\ell_{2}$, but the gap between
their achieved angular error decreases as the measurements become
more noisy.

\begin{sidewaysfigure*}
\centering\includegraphics[width=1\textwidth]{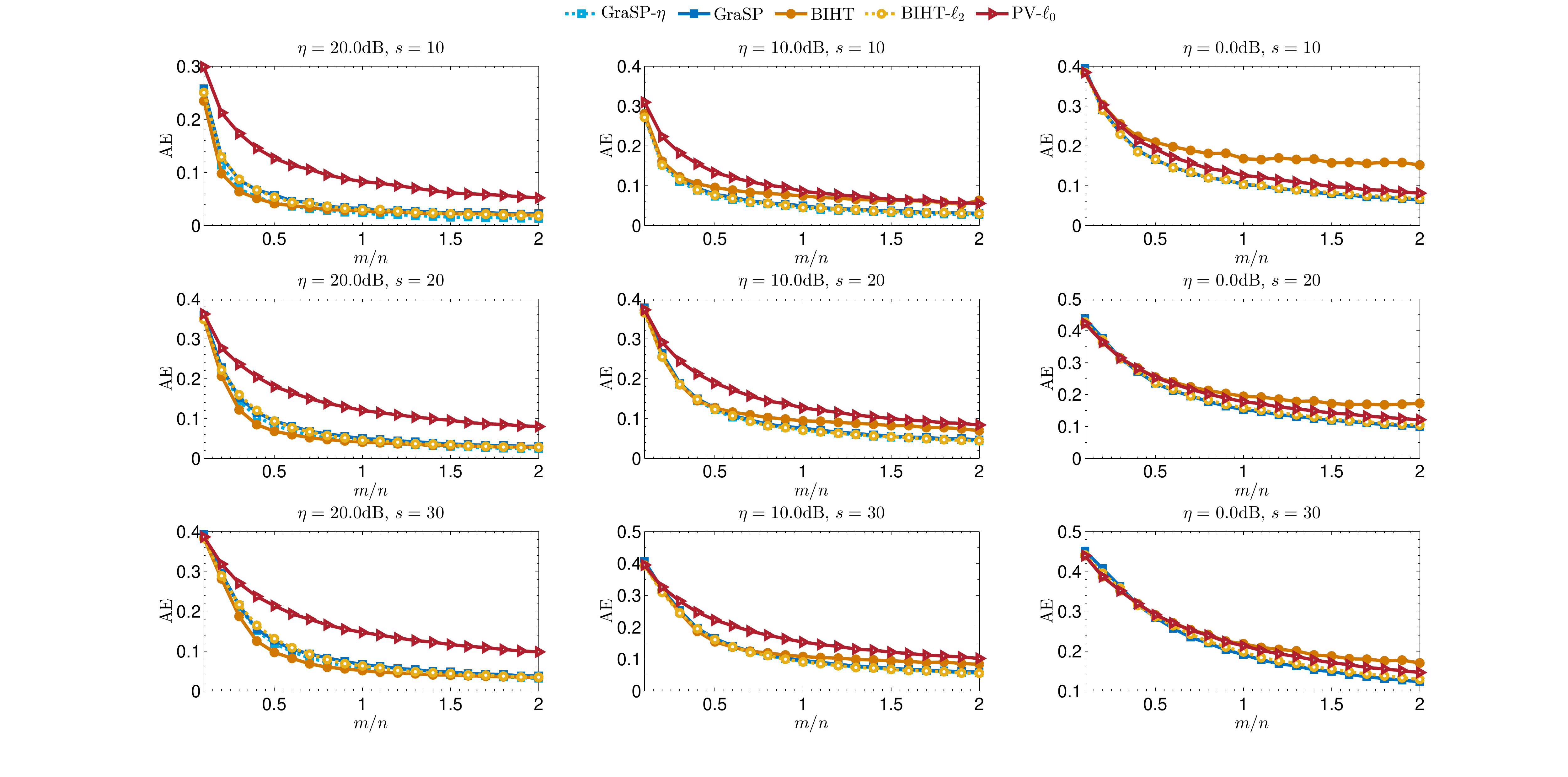}\caption{Angular error (AE) vs. the sampling ratio ($m/n$) at different values
of input SNR ($\eta$) and sparsity ($s$)\label{fig:AE}}
\end{sidewaysfigure*}

The reconstruction SNR of the estimates produced by the algorithms
are compared in Figure \ref{fig:RSNR}. The reconstruction SNR conveys
the same information as the angular error as it can be calculated
through the formula 
\begin{alignat*}{1}
\mathrm{R-SNR}\left(\widehat{\vc x}\right) & :=-20\log_{10}\norm[2]{\widehat{\vc x}-\vc x^{\star}}\\
 & =-10\log_{10}\left(2-2\cos\mathrm{AE}\left(\widehat{\vc x}\right)\right).
\end{alignat*}
 However, it magnifies small differences between the algorithms that
were difficult to trace using the angular error. For example, it can
be seen in Figure \ref{fig:RSNR} that at $\eta=20$dB and $s=10$,
GraSP-$\eta$ has an advantage (of up to 2dB) in reconstruction SNR.

\begin{sidewaysfigure*}
\centering\includegraphics[width=1\textwidth]{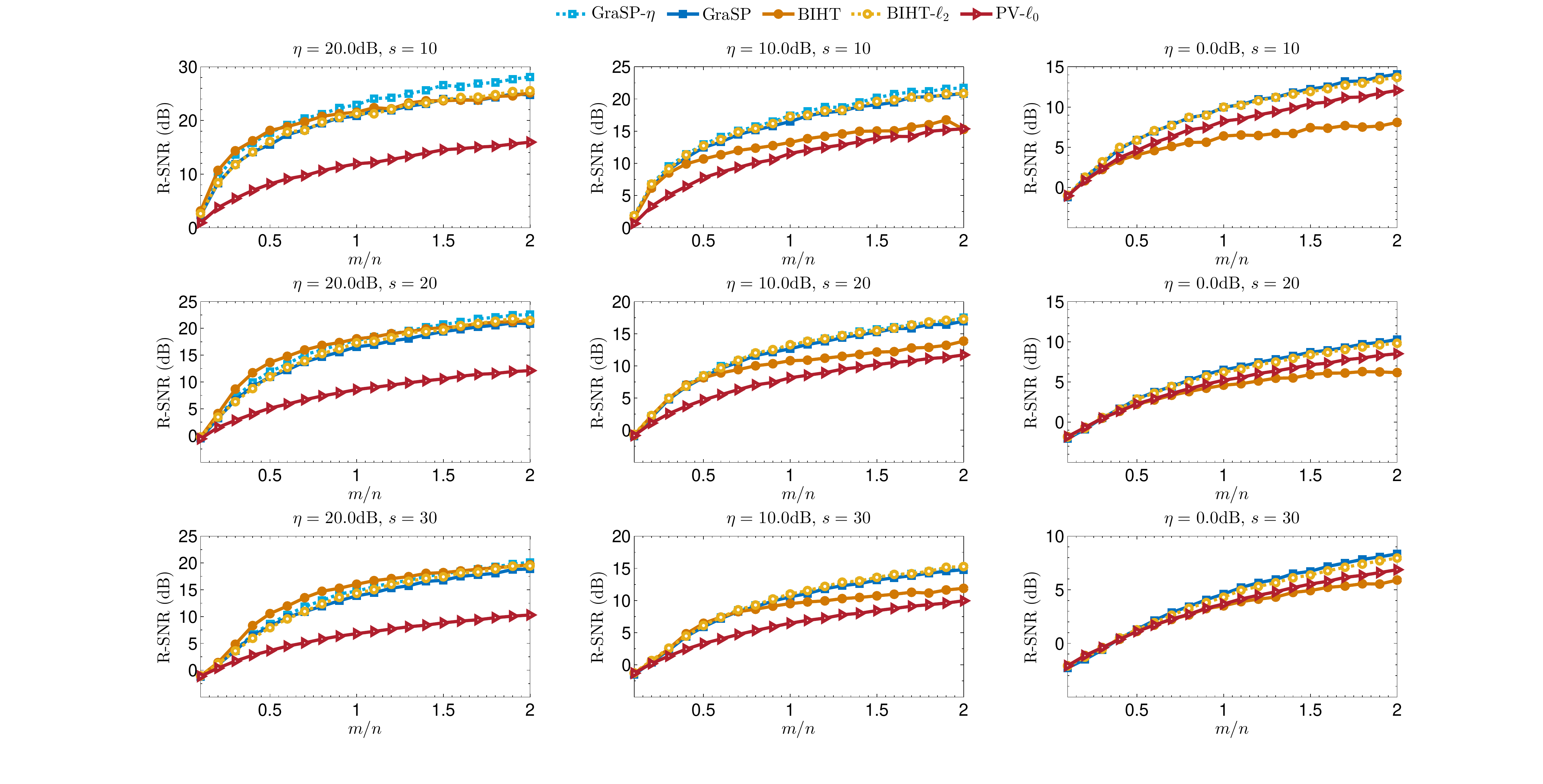}\caption{Reconstruction SNR on the unit ball (R-SNR) vs. the sampling ratio
($m/n$) at different values of input SNR ($\eta$) and sparsity ($s$)\label{fig:RSNR}}
\end{sidewaysfigure*}

Furthermore, we evaluated performance of the algorithms in terms of
identifying the correct support set of the target sparse signal by
are comparing their achieved False Negative Rate 
\begin{alignat*}{1}
\mathrm{FNR} & =\frac{\left|\supp\left(\vc x^{\star}\right)\backslash\supp\left(\widehat{\vc x}\right)\right|}{\left|\supp\left(\vc x^{\star}\right)\right|}
\end{alignat*}
 and False Positive Rate 
\begin{alignat*}{1}
\mathrm{FPR} & =\frac{\left|\supp\left(\widehat{\vc x}\right)\backslash\supp\left(\vc x^{\star}\right)\right|}{n-\left|\supp\left(\vc x^{\star}\right)\right|}.
\end{alignat*}
Figures \ref{fig:FNR} and \ref{fig:FPR} illustrate these rates for
the studied algorithms. It can be seen in Figure \ref{fig:FNR} that
at $\eta=20$dB, BIHT achieves a FNR slightly lower than that of the
variants of GraSP, whereas PV-$\ell_{0}$ and BIHT-$\ell_{2}$ rank
first and second, respectively, in the highest FNR at a distant from
the other algorithms. However, as $\eta$ decreases the FNR of BIHT
deteriorates relative to the other algorithms while BIHT-$\ell_{2}$
shows improved FNR. The GraSP variants exhibit better performance
overall at smaller values of $\eta$ especially with $s=10$, but
for $\eta=10$dB and at low sampling ratios BIHT attains a slightly
better FNR. The relative performance of the algorithms in terms of
FPR, illustrated in Figure \ref{fig:FPR}, is similar. 

\begin{sidewaysfigure*}
\centering\includegraphics[width=1\textwidth]{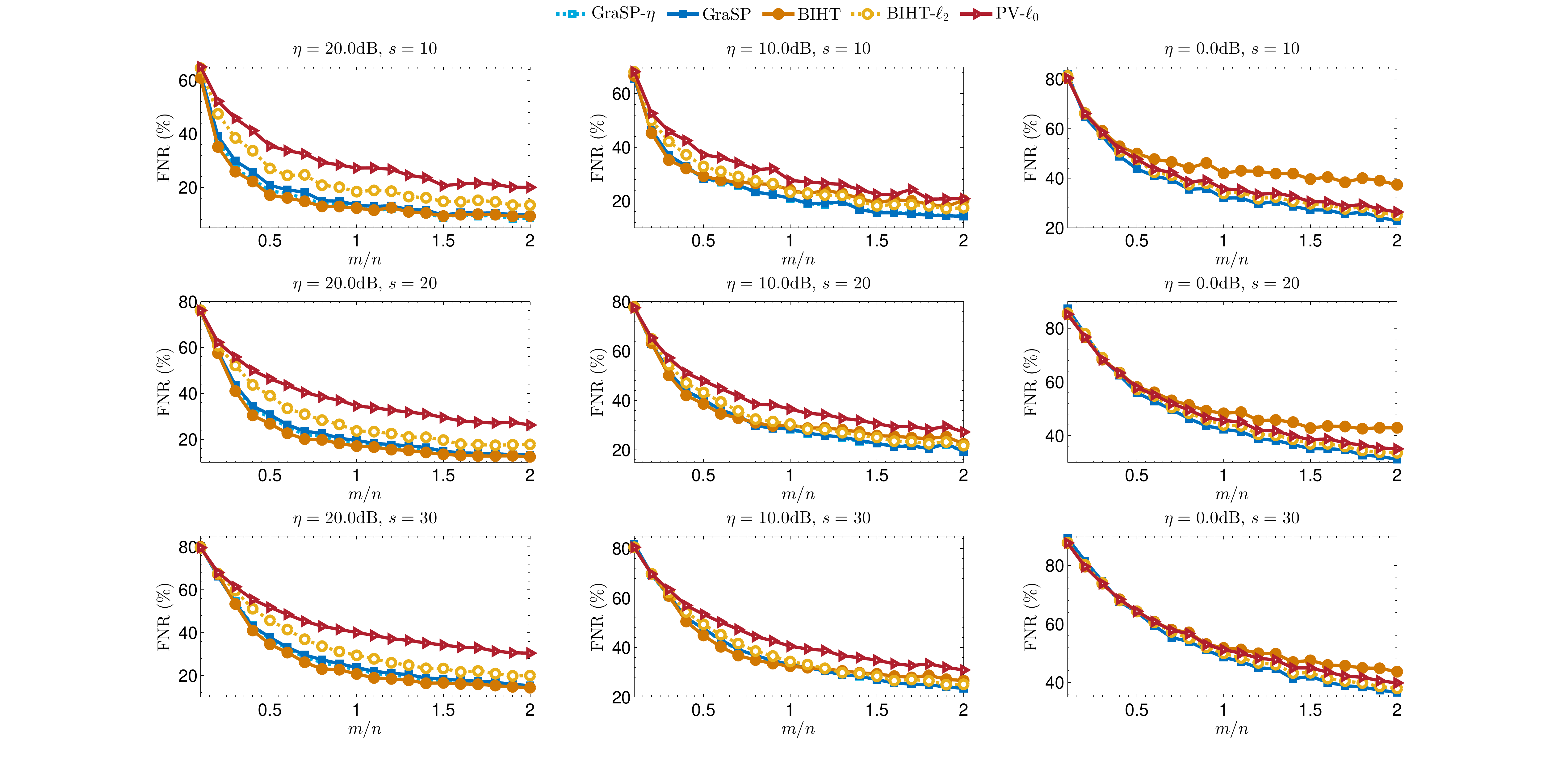}\caption{False Negative Rate (FNR) vs. the sampling ratio ($m/n$) at different
values of input SNR ($\eta$) and sparsity~($s$)\label{fig:FNR}}
\end{sidewaysfigure*}
\begin{sidewaysfigure*}
\centering\includegraphics[width=1\textwidth]{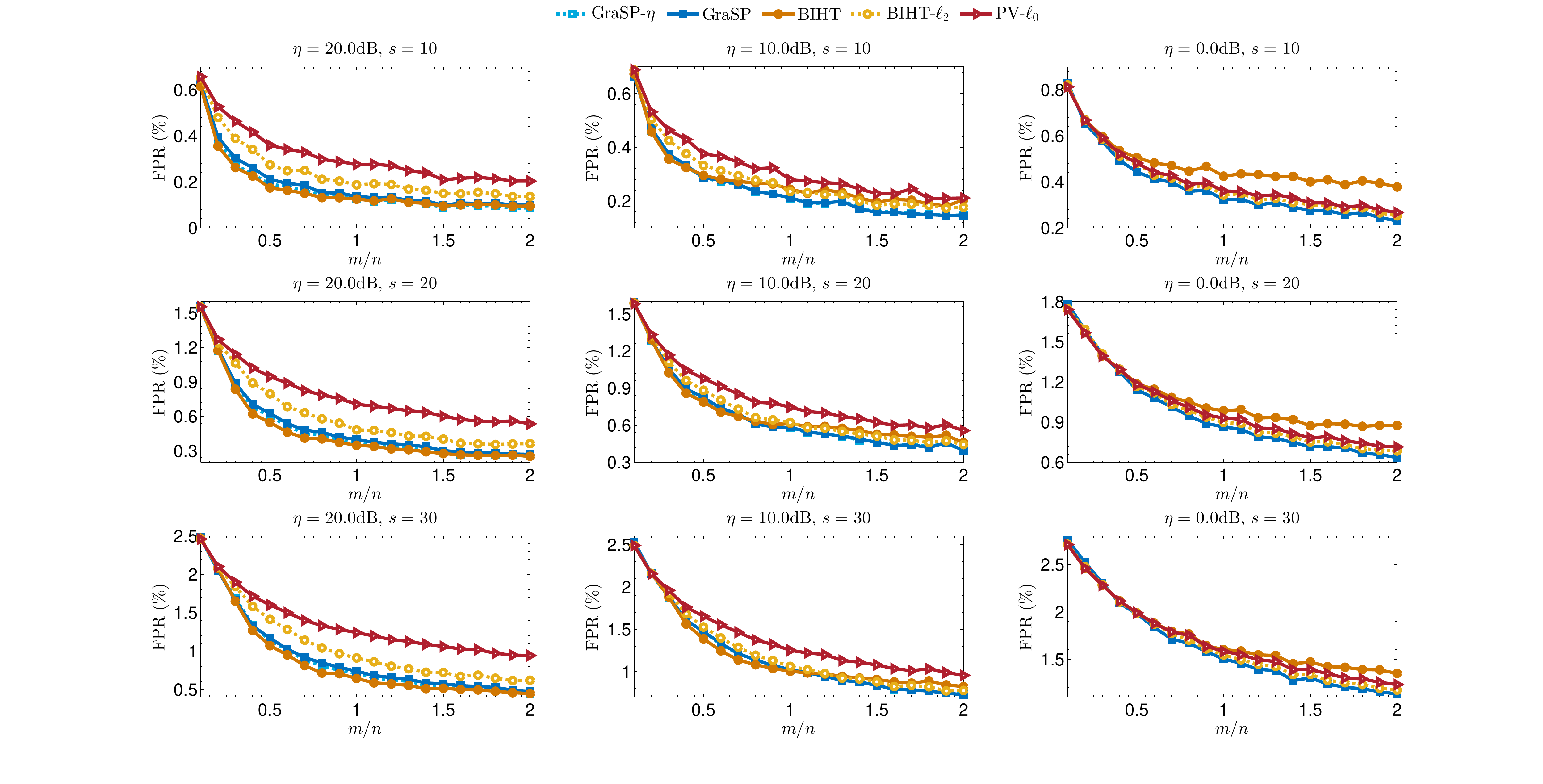}\caption{False Positive Rate (FPR) vs. the sampling ratio ($m/n$) at different
values of input SNR ($\eta$) and sparsity~($s$)\label{fig:FPR}}
\end{sidewaysfigure*}

We also compared the algorithms in terms of their average execution
time ($T$) measured in seconds. The simulation was ran on a PC with
an AMD Phenom\texttrademark II X6 2.60GHz processor and 8.00GB of
RAM. The average execution time of the algorithms, all of which are
implemented in MATLAB\textsuperscript{\textregistered}, is illustrated
in \ref{fig:ET} in log scale. It can be observed from the figure
that PV-$\ell_{0}$ is the fastest algorithm which can be attributed
to its non-iterative procedure. Furthermore, in general BIHT-$\ell_{2}$
requires significantly longer time compared to the other algorithms.
The BIHT, however, appears to be the fastest among the iterative algorithms
at low sampling ratio or at large values of $\eta$. The GraSP variants
generally run at similar speed, while they are faster than BIHT at
low values of $\eta$ and high sampling ratios.

\begin{sidewaysfigure*}
\centering\includegraphics[width=1\textwidth]{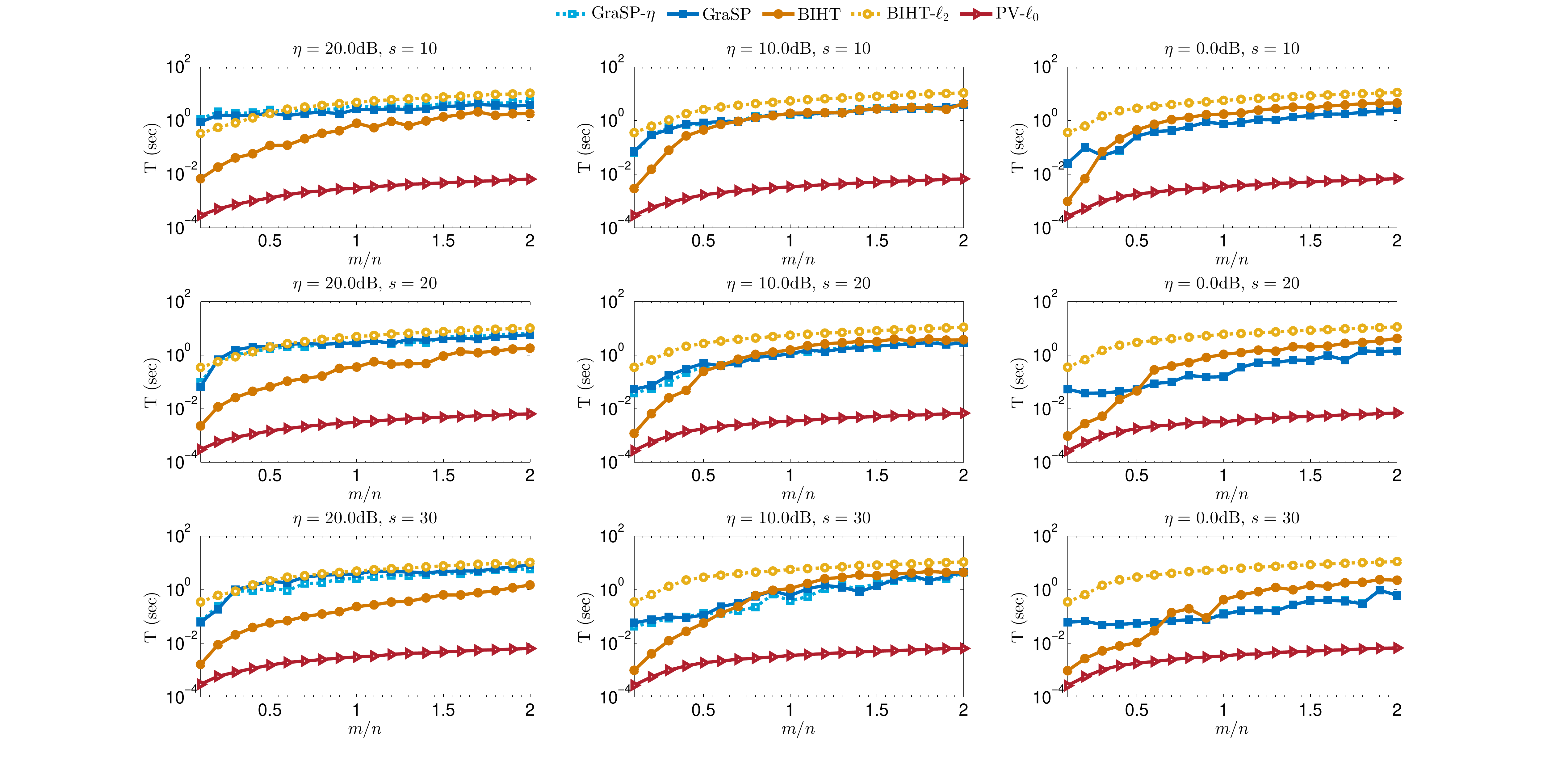}\caption{Average execution time (T) vs. the sampling ratio ($m/n$) at different
values of input SNR ($\eta$) and sparsity~($s$)\label{fig:ET}}
\end{sidewaysfigure*}

\section{Conclusion\label{sec:Conclusion}}

In this paper we revisited a formulation of the 1-bit CS problem based
on the maximum likelihood estimation. Furthermore, we applied a variant
of the GraSP algorithm \cite{Bahmani_Greedy_2012} to this problem.
We showed through numerical simulations that the proposed algorithms
have robust performance in presence of noise. While at high levels
of input SNR these algorithms are outperformed by a narrow margin
by the competing algorithms, in low input SNR regime our algorithms
show a solid performance at reasonable computational cost.

\appendices

\section{Proofs\label{apndx:1-bitCS}}

To prove Theorem \ref{thm:IterationInvariant} we use the following
two lemmas. We omit the proofs since they can be easily adapted from
Lemmas 1 and 2 in \cite{Bahmani_Greedy_2012} using straightforward
changes. It suffices to notice that 
\begin{enumerate}
\item the proof in \cite{Bahmani_Greedy_2012} still holds if the estimation
errors are measured with respect to the true sparse minimizer or any
other feasible (i.e., $s$-sparse) point, rather than the statistical
true parameter, and
\item the iterates and the crude estimates will always remain in the sphere
of radius $r$ centered at the origin where the SRH applies.
\end{enumerate}
In what follows $\int_{0}^{1}\alpha_{k}\left(\tau\vc x+\left(1\!-\!\tau\right)\overline{\vc x}\right)\dx\tau$
and $\int_{0}^{1}\beta_{k}\left(\tau\vc x+\left(1\!-\!\tau\right)\overline{\vc x}\right)\dx\tau$
are denoted by $\widetilde{\alpha}_{k}\left(\vc x\right)$ and $\widetilde{\beta}_{k}\left(\vc x\right)$,
respectively. We also define $\widetilde{\gamma}_{k}\left(\vc x\right):=\widetilde{\alpha}_{k}\left(\vc x\right)-\widetilde{\beta}_{k}\left(\vc x\right)$.
\begin{lem}
\label{lem:VComplement}Let $\st Z$ be the index set defined in Algorithm
\ref{alg:1bit-GraSP} and $\st R$ denote the set $\supp\left(\vc x\itr t-\overline{\vc x}\right)$.
Then the iterate $\vc x\itr t$ obeys 
\begin{alignat*}{1}
\norm[2]{\res{\left(\vc x\itr t-\overline{\vc x}\right)}_{\st Z\cmpl}} & \leq\frac{\widetilde{\gamma}_{4s}\left(\vc x\itr t\right)+\widetilde{\gamma}_{2s}\left(\vc x\itr t\right)}{\widetilde{\beta}_{2s}\left(\vc x\itr t\right)}\norm[2]{\vc x\itr t-\overline{\vc x}}+\frac{\norm[2]{\nabla_{\st R\backslash\st Z}f\left(\overline{\vc x}\right)}+\norm[2]{\nabla_{\st Z\backslash\st R}f\left(\overline{\vc x}\right)}}{\widetilde{\beta}_{2s}\left(\vc x\itr t\right)}.\\
\end{alignat*}

\end{lem}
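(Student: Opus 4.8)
Bound $\norm[2]{\res{(\vc x\itr t-\overline{\vc x})}_{\st Z\cmpl}}$ by the claimed combination of the SRH mismatch $\widetilde\gamma_{4s},\widetilde\gamma_{2s}$ and the restricted gradient of $f$ at $\overline{\vc x}$.

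\emph{Proof idea.} The plan is to follow the proof of Lemma~1 in \cite{Bahmani_Greedy_2012}, using the two facts already noted: the reference vector is the feasible point $\overline{\vc x}$ rather than a statistical parameter, and the Hessian is only ever evaluated inside the sphere of radius $r$. For the second fact, $\vc x\itr t=\vc b_s$ from the previous pass through line~\ref{lin:crude} obeys $\norm[2]{\vc x\itr t}\le\norm[2]{\vc b}\le r$ (and $\vc x\itr 0=\vc 0$), while $\norm[2]{\overline{\vc x}}\le r$ by hypothesis, so every interpolant $\tau\vc x\itr t+(1-\tau)\overline{\vc x}$, $\tau\in[0,1]$, lies in that sphere and the $(\mu_{4s},r)$-SRH bounds hold along the whole segment. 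Set $\vc d:=\vc x\itr t-\overline{\vc x}$, $\st R:=\supp(\vc d)$, and $\mx H:=\int_0^1\nabla^2 f(\tau\vc x\itr t+(1-\tau)\overline{\vc x})\dx\tau$, so that $\nabla f(\vc x\itr t)=\nabla f(\overline{\vc x})+\mx H\vc d$. Since $\vc x\itr t$ and $\overline{\vc x}$ are $s$-sparse, $|\st R|\le 2s$, the support of $\st R$ together with that of any interpolant spans at most $2s$ coordinates, and $|\st R\cup\st Z|\le 4s$. Finally $\res{\vc d}_{\st Z\cmpl}$ is supported on $\st R\setminus\st Z$, so it suffices to bound $\norm[2]{\res{\vc d}_{\st R\setminus\st Z}}$.

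The pivotal observation comes from the selection in line~\ref{lin:identify}: in the main case $\nabla f(\vc x\itr t)$ has at least $2s$ nonzero entries, so $|\st Z|=2s\ge|\st R|$ and hence $|\st Z\setminus\st R|\ge|\st R\setminus\st Z|$; since every coordinate in $\st Z$ carries a gradient entry at least as large in magnitude as every coordinate outside $\st Z$, this gives
\[
\norm[2]{\nabla_{\st R\setminus\st Z}f(\vc x\itr t)}\le\norm[2]{\nabla_{\st Z\setminus\st R}f(\vc x\itr t)}
\]
(if $\nabla f(\vc x\itr t)$ has fewer than $2s$ nonzeros, $\st Z$ contains all of them, the left side is $\vc 0$, and the inequality is trivial). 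I would then estimate the two sides in opposite directions. For the right side, $\nabla_{\st Z\setminus\st R}f(\vc x\itr t)=\nabla_{\st Z\setminus\st R}f(\overline{\vc x})+\mx P_{\st Z\setminus\st R}\tran\mx H\vc d$; because $\st Z\setminus\st R$ is disjoint from $\st R\supseteq\supp(\vc d)$ and their union together with the interpolants spans at most $4s$ coordinates, subtracting $\beta_{4s}(\cdot)\mx I$ under the integral turns the relevant block of $\mx H$ into an off-diagonal operator of restricted norm at most $\widetilde\gamma_{4s}(\vc x\itr t)$, so $\norm[2]{\mx P_{\st Z\setminus\st R}\tran\mx H\vc d}\le\widetilde\gamma_{4s}(\vc x\itr t)\norm[2]{\vc d}$ and therefore $\norm[2]{\nabla_{\st Z\setminus\st R}f(\vc x\itr t)}\le\norm[2]{\nabla_{\st Z\setminus\st R}f(\overline{\vc x})}+\widetilde\gamma_{4s}(\vc x\itr t)\norm[2]{\vc d}$.

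For the left side I would lower-bound $\norm[2]{\mx P_{\st R\setminus\st Z}\tran\mx H\vc d}$, which is legitimate since $\nabla_{\st R\setminus\st Z}f(\vc x\itr t)=\nabla_{\st R\setminus\st Z}f(\overline{\vc x})+\mx P_{\st R\setminus\st Z}\tran\mx H\vc d$. Split $\vc d=\vcg\Delta+\vcg\Delta'$ with $\vcg\Delta:=\res{\vc d}_{\st R\setminus\st Z}$ and $\vcg\Delta':=\res{\vc d}_{\st R\cap\st Z}$. Cauchy--Schwarz gives $\norm[2]{\mx P_{\st R\setminus\st Z}\tran\mx H\vc d}\ge\vcg\Delta\tran\mx H\vc d/\norm[2]{\vcg\Delta}$; writing $\vcg\Delta\tran\mx H\vc d=\vcg\Delta\tran\mx H\vcg\Delta+\vcg\Delta\tran\mx H\vcg\Delta'$, the first term is at least $\widetilde\beta_{2s}(\vc x\itr t)\norm[2]{\vcg\Delta}^2$ by integrating the SRH lower bound ($\supp\vcg\Delta$ together with every interpolant's support spans at most $2s$ coordinates), and the second is at least $-\widetilde\gamma_{2s}(\vc x\itr t)\norm[2]{\vcg\Delta}\norm[2]{\vcg\Delta'}$ by the same off-diagonal-block estimate applied to the disjoint pair $\st R\setminus\st Z$, $\st R\cap\st Z$, which spans at most $2s$ coordinates. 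Using $\norm[2]{\vcg\Delta'}\le\norm[2]{\vc d}$ this gives $\norm[2]{\mx P_{\st R\setminus\st Z}\tran\mx H\vc d}\ge\widetilde\beta_{2s}(\vc x\itr t)\norm[2]{\vcg\Delta}-\widetilde\gamma_{2s}(\vc x\itr t)\norm[2]{\vc d}$, so by the triangle inequality $\norm[2]{\nabla_{\st R\setminus\st Z}f(\vc x\itr t)}\ge\widetilde\beta_{2s}(\vc x\itr t)\norm[2]{\vcg\Delta}-\widetilde\gamma_{2s}(\vc x\itr t)\norm[2]{\vc d}-\norm[2]{\nabla_{\st R\setminus\st Z}f(\overline{\vc x})}$.

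Chaining the last two estimates through the pivotal inequality yields $\widetilde\beta_{2s}(\vc x\itr t)\norm[2]{\vcg\Delta}\le(\widetilde\gamma_{4s}(\vc x\itr t)+\widetilde\gamma_{2s}(\vc x\itr t))\norm[2]{\vc d}+\norm[2]{\nabla_{\st R\setminus\st Z}f(\overline{\vc x})}+\norm[2]{\nabla_{\st Z\setminus\st R}f(\overline{\vc x})}$; dividing by $\widetilde\beta_{2s}(\vc x\itr t)>0$ (which is positive by the SRH, as the order-$2s$ curvature lower bound dominates the positive order-$4s$ one) and recalling $\vcg\Delta=\res{(\vc x\itr t-\overline{\vc x})}_{\st Z\cmpl}$ produces exactly the claimed bound. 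The main obstacle I anticipate is purely bookkeeping: ensuring that each invocation of the SRH is paired with a genuine index set of the matching order --- $2s$ for the diagonal and near-diagonal blocks touching only $\st R$, $4s$ for the block coupling $\st Z\setminus\st R$ to $\supp(\vc d)$ --- and disposing of the degenerate case $|\supp\nabla f(\vc x\itr t)|<2s$; once those are handled, what remains is the routine ``subtract a multiple of the identity to expose a small off-diagonal operator'' manipulation carried out under the $\int_0^1\dx\tau$.
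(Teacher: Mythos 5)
Your proposal is correct and is essentially the argument the paper relies on: it reconstructs Lemma~1 of \cite{Bahmani_Greedy_2012} (the pivotal inequality $\norm[2]{\nabla_{\st R\setminus\st Z}f(\vc x\itr t)}\leq\norm[2]{\nabla_{\st Z\setminus\st R}f(\vc x\itr t)}$ from the top-$2s$ selection, the integral mean-value form of the gradient, and the diagonal/off-diagonal SRH block estimates of orders $2s$ and $4s$), together with exactly the two adaptations the paper flags --- measuring error against the feasible point $\overline{\vc x}$ and checking that all interpolants stay in the radius-$r$ sphere. No gaps; the bookkeeping of set sizes and the degenerate case are handled correctly.
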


\begin{lem}
\label{lem:ErrorWRTb}The vector $\vc b$ defined at line \ref{lin:crude}
of Algorithm \ref{alg:1bit-GraSP} obeys 
\begin{alignat*}{1}
\norm[2]{\res{\overline{\vc x}}_{\st T}-\vc b} & \leq\frac{\norm[2]{\nabla_{\st T}f\left(\overline{\vc x}\right)}}{\widetilde{\beta}_{4s}\left(\vc b\right)}+\frac{\widetilde{\gamma}_{4s}\left(\vc b\right)}{2\widetilde{\beta}_{4s}\left(\vc b\right)}\norm[2]{\res{\overline{\vc x}}_{\st T\cmpl}}.
\end{alignat*}
\end{lem}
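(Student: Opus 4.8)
The plan is to exploit the first-order optimality of the constrained minimizer $\vc b$ together with the mean-value form of the gradient and the curvature bounds supplied by the SRH. Write $\vc w:=\res{\overline{\vc x}}_{\st T}$ and $\vc u:=\res{\overline{\vc x}}_{\st T\cmpl}$, so that $\overline{\vc x}=\vc w+\vc u$, and set $\vc d:=\vc b-\vc w$, whose norm is exactly the left-hand side of the claim. Since $\norm[2]{\vc w}\leq\norm[2]{\overline{\vc x}}\leq r$ and $\res{\vc w}_{\st T\cmpl}=\vc 0$, the vector $\vc w$ lies in the feasible set $\st C=\{\vc x:\res{\vc x}_{\st T\cmpl}=\vc 0,\ \norm[2]{\vc x}\leq r\}$ over which $\vc b$ minimizes $f$ at line \ref{lin:crude}. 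Because the SRH of order $4s$ forces $f$ to be strongly convex on the $\st T$-supported slice of the ball (its restricted Hessian is bounded below by $\beta_{4s}>0$), this minimization is a convex program and the variational inequality $\langle\nabla f(\vc b),\vc x-\vc b\rangle\geq0$ holds for every $\vc x\in\st C$. Taking $\vc x=\vc w$ and noting that $\vc w-\vc b=-\vc d$ is supported on $\st T$ yields $\langle\nabla_{\st T}f(\vc b),\vc d\rangle\leq0$.

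Next I would convert the gradient at $\vc b$ into a gradient at $\overline{\vc x}$ plus a curvature term. By the fundamental theorem of calculus along the segment joining $\overline{\vc x}$ to $\vc b$, one has $\nabla f(\vc b)-\nabla f(\overline{\vc x})=\mx H\vcg\Delta$ with $\vcg\Delta:=\vc b-\overline{\vc x}=\vc d-\vc u$ and $\mx H:=\int_{0}^{1}\nabla^2 f(\tau\vc b+(1-\tau)\overline{\vc x})\dx\tau$; this is precisely the averaged Hessian whose restricted spectrum integrates to the tilde quantities $\widetilde{\beta}_{4s}(\vc b)$, $\widetilde{\alpha}_{4s}(\vc b)$, and $\widetilde{\gamma}_{4s}(\vc b)$. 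Substituting into the optimality inequality and using that $\vc d$ is supported on $\st T$ gives $\langle\nabla_{\st T}f(\overline{\vc x}),\vc d\rangle+\langle\mx H\vc d,\vc d\rangle-\langle\mx H\vc u,\vc d\rangle\leq0$, that is $\langle\mx H\vc d,\vc d\rangle\leq-\langle\nabla_{\st T}f(\overline{\vc x}),\vc d\rangle+\langle\mx H\vc u,\vc d\rangle$. Throughout, the support bookkeeping $\supp(\vc d)\subseteq\st T$ (size $\leq3s$) and $\supp(\vc d)\cup\supp(\overline{\vc x})$ (size $\leq4s$) is what guarantees the order-$4s$ SRH applies at every point of the segment.

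The three terms would then be estimated separately. For the left-hand quadratic form, integrating the lower SRH bound $\vc d\tran\nabla^2 f(\cdot)\vc d\geq\beta_{4s}(\cdot)\norm[2]{\vc d}^2$ over $\tau$ gives $\langle\mx H\vc d,\vc d\rangle\geq\widetilde{\beta}_{4s}(\vc b)\norm[2]{\vc d}^2$. For the gradient term, Cauchy--Schwarz gives $-\langle\nabla_{\st T}f(\overline{\vc x}),\vc d\rangle\leq\norm[2]{\nabla_{\st T}f(\overline{\vc x})}\,\norm[2]{\vc d}$. The main obstacle is the cross term $\langle\mx H\vc u,\vc d\rangle$, where $\vc u$ and $\vc d$ have disjoint supports ($\st T\cmpl$ versus $\st T$): I would bound it using the standard off-diagonal estimate that, for a symmetric matrix whose restricted eigenvalues lie in $[\beta,\alpha]$ and for vectors of disjoint support, $\left|\langle\mx M\vc u,\vc d\rangle\right|\leq\frac{\alpha-\beta}{2}\norm[2]{\vc u}\norm[2]{\vc d}$. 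This follows by writing $\mx M=\frac{\alpha+\beta}{2}\mx I+\left(\mx M-\frac{\alpha+\beta}{2}\mx I\right)$, observing that $\langle\vc u,\vc d\rangle=0$ annihilates the multiple-of-identity part, and that the remainder has operator norm at most $\frac{\alpha-\beta}{2}$ on the combined support. Applying this pointwise with $\alpha=\alpha_{4s}(\cdot)$ and $\beta=\beta_{4s}(\cdot)$ and integrating yields $\left|\langle\mx H\vc u,\vc d\rangle\right|\leq\frac{1}{2}\widetilde{\gamma}_{4s}(\vc b)\norm[2]{\vc u}\norm[2]{\vc d}$.

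Combining the three estimates gives $\widetilde{\beta}_{4s}(\vc b)\norm[2]{\vc d}^2\leq\norm[2]{\nabla_{\st T}f(\overline{\vc x})}\norm[2]{\vc d}+\frac{1}{2}\widetilde{\gamma}_{4s}(\vc b)\norm[2]{\vc u}\norm[2]{\vc d}$; dividing by $\widetilde{\beta}_{4s}(\vc b)\norm[2]{\vc d}$ (the case $\vc d=\vc 0$ being trivial) and recalling $\vc d=\vc b-\res{\overline{\vc x}}_{\st T}$ and $\vc u=\res{\overline{\vc x}}_{\st T\cmpl}$ reproduces the claimed inequality. I expect the only delicate points to be the justification of the variational inequality, which needs the strong convexity furnished by the SRH so that first-order conditions are necessary and sufficient, and the disjoint-support off-diagonal bound, which is the one genuinely non-routine ingredient.
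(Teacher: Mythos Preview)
Your argument is correct and is precisely the adaptation the paper has in mind: it omits the proof and points to Lemma~2 of \cite{Bahmani_Greedy_2012}, noting only that the iterates and crude estimates remain in the radius-$r$ ball so that the $\left(\mu_{4s},r\right)$-SRH applies along the segment. The one substantive change relative to the unconstrained GraSP lemma---replacing the stationarity condition $\nabla_{\st T}f(\vc b)=\vc 0$ by the variational inequality $\langle\nabla f(\vc b),\vc w-\vc b\rangle\geq0$ for the feasible point $\vc w=\res{\overline{\vc x}}_{\st T}$---is exactly what you do, and the rest (mean-value Hessian, lower curvature bound, disjoint-support off-diagonal estimate) matches the standard proof.
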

\begin{proof}[\textbf{Proof of Theorem \ref{thm:IterationInvariant}}]
Since $\st Z\subseteq\st T$ we have $\st T\cmpl\subseteq\st Z\cmpl$
and thus 
\begin{alignat*}{1}
\norm[2]{\res{\left(\vc x\itr t-\overline{\vc x}\right)}_{\st Z\cmpl}} & \geq\norm[2]{\res{\left(\vc x\itr t-\overline{\vc x}\right)}_{\st T\cmpl}}\\
 & =\norm[2]{\res{\overline{\vc x}}_{\st T\cmpl}}.
\end{alignat*}
 Then it follows from Lemma \ref{lem:VComplement} that 
\begin{alignat}{1}
\norm[2]{\res{\overline{\vc x}}_{\st T\cmpl}} & \leq\frac{\widetilde{\gamma}_{4s}\left(\vc x\itr t\right)}{\widetilde{\beta}_{4s}\left(\vc x\itr t\right)}\norm[2]{\vc x\itr t-\overline{\vc x}}\nonumber \\
 & +\frac{\norm[2]{\nabla_{\st R\backslash\st Z}f\left(\overline{\vc x}\right)}+\norm[2]{\nabla_{\st Z\backslash\st R}f\left(\overline{\vc x}\right)}}{\beta_{4s}}\nonumber \\
 & \leq\left(\mu_{4s}-1\right)\norm[2]{\vc x\itr t-\overline{\vc x}}+2\epsilon,\label{eq:xTc}
\end{alignat}
 where we used the fact that $\alpha_{4s}\geq\alpha_{2s}$ and $\beta_{4s}\leq\beta_{2s}$
to simplify the expressions. Furthermore, we have 
\begin{alignat*}{1}
\norm[2]{\vc x\itr{t+1}-\overline{\vc x}} & =\norm[2]{\vc b_{s}-\overline{\vc x}}\\
 & \leq\norm[2]{\vc b_{s}-\res{\overline{\vc x}}_{\st T}}+\norm[2]{\res{\overline{\vc x}}_{\st T\cmpl}}\\
 & \leq\norm[2]{\vc b_{s}-\vc b}+\norm[2]{\vc b-\res{\overline{\vc x}}_{\st T}}+\norm[2]{\res{\overline{\vc x}}_{\st T\cmpl}}\\
 & \leq2\norm[2]{\vc b-\res{\overline{\vc x}}_{\st T}}+\norm[2]{\res{\overline{\vc x}}_{\st T\cmpl}},
\end{alignat*}
 where the last inequality holds because $\vc b_{s}$ is the best
$s$-term approximation of $\vc b$. Hence, it follows from Lemma
\ref{lem:ErrorWRTb} that 
\begin{alignat*}{1}
\norm[2]{\vc x\itr{t+1}-\overline{\vc x}} & \leq2\frac{\norm[2]{\nabla_{\st T}f\left(\overline{\vc x}\right)}}{\widetilde{\beta}_{4s}\left(\vc b\right)}+\frac{\widetilde{\alpha}_{4s}\left(\vc b\right)}{\widetilde{\beta}_{4s}\left(\vc b\right)}\norm[2]{\res{\overline{\vc x}}_{\st T\cmpl}}\\
 & \leq2\epsilon+\mu_{4s}\norm[2]{\res{\overline{\vc x}}_{\st T\cmpl}}.
\end{alignat*}
 Then applying (\ref{eq:xTc}) and simplifying the resulting inequality
yield 
\begin{alignat*}{1}
\norm[2]{\vc x\itr{t+1}-\overline{\vc x}} & \leq2\epsilon+\mu_{4s}\left(\left(\mu_{4s}-1\right)\norm[2]{\vc x\itr t-\overline{\vc x}}+2\epsilon\right)\\
 & \leq\left(\mu_{4s}^{2}-\mu_{4s}\right)\norm[2]{\vc x\itr t-\overline{\vc x}}+2\left(\mu_{4s}+1\right)\epsilon,
\end{alignat*}
which is the desired result.\end{proof}
\begin{lem}[Bounded Sparse Projection]
\label{lem:BSparseProj}For any $\vc x\in\mathbb{R}^{n}$ the vector
$\max\left\{ 1,\frac{r}{\norm[2]{\vc x_{s}}}\right\} \vc x_{s}$ is
a solution to the minimization
\begin{alignat}{1}
\arg\min_{\vc w}\ \norm[2]{\vc x-\vc w} & \quad\text{s.t. }\norm[2]{\vc w}\leq r\text{ and }\norm[0]{\vc w}\leq s.\label{eq:BSProj}
\end{alignat}
\end{lem}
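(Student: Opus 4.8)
The plan is to decouple the minimization~(\ref{eq:BSProj}) into a choice of support followed by a one-dimensional projection, and then reduce the support selection to a monotonicity argument. First I would fix a candidate support $\st S\subseteq[n]$ with $\left|\st S\right|\leq s$ and observe that for every $\vc w$ with $\supp\left(\vc w\right)\subseteq\st S$,
\begin{alignat*}{1}
\norm[2]{\vc x-\vc w}^{2} & =\norm[2]{\res{\vc x}_{\st S}-\vc w}^{2}+\norm[2]{\res{\vc x}_{\st S\cmpl}}^{2}=\norm[2]{\res{\vc x}_{\st S}-\vc w}^{2}+\norm[2]{\vc x}^{2}-\norm[2]{\res{\vc x}_{\st S}}^{2}.
\end{alignat*}
Hence, for fixed $\st S$, the optimal $\vc w$ is the Euclidean projection of $\res{\vc x}_{\st S}$ onto the $\ell_{2}$-ball of radius $r$ in the coordinate subspace indexed by $\st S$, namely $\min\left\{ 1,r/\norm[2]{\res{\vc x}_{\st S}}\right\} \res{\vc x}_{\st S}$ (with the convention that this equals $\vc{0}$ when $\res{\vc x}_{\st S}=\vc{0}$), which is a standard fact about projection onto a ball.

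Next I would express the resulting optimal value as a function of $\rho:=\norm[2]{\res{\vc x}_{\st S}}$ alone. Writing $C:=\norm[2]{\vc x}^{2}$, this value equals $C-\rho^{2}$ when $\rho\leq r$ (the projection is exact) and $\left(\rho-r\right)^{2}+C-\rho^{2}=C+r^{2}-2r\rho$ when $\rho>r$; the two branches agree at $\rho=r$, the function is continuous, and it is nonincreasing in $\rho\geq0$ (strictly so when $r>0$). Consequently, minimizing~(\ref{eq:BSProj}) amounts to choosing $\st S$ so as to \emph{maximize} $\norm[2]{\res{\vc x}_{\st S}}$ subject to $\left|\st S\right|\leq s$, and any maximizer consists of indices of $s$ largest-magnitude entries of $\vc x$; for such a choice $\res{\vc x}_{\st S}=\vc x_{s}$ and $\norm[2]{\res{\vc x}_{\st S}}=\norm[2]{\vc x_{s}}$.

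Combining the two steps, an optimal solution of~(\ref{eq:BSProj}) is $\min\left\{ 1,r/\norm[2]{\vc x_{s}}\right\} \vc x_{s}$, which is precisely the asserted vector (the $\max$ in the statement should read $\min$); it is feasible because it is $s$-sparse and has norm $\min\left\{ \norm[2]{\vc x_{s}},r\right\} \leq r$. I expect the only nontrivial point to be the second step: one must verify that the objective, after the inner projection, depends on $\st S$ only through $\rho$ and is monotone in it, so that the greedy top-$s$ support remains optimal even when the norm constraint is active. Once the piecewise formula and its monotonicity are in hand, the remainder is bookkeeping.
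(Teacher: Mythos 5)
Your proof is correct and follows essentially the same route as the paper's: decompose the squared distance over a fixed support $\st S$, replace $\vc w$ by the projection of $\res{\vc x}_{\st S}$ onto the radius-$r$ ball, and observe that the resulting optimal value is a nonincreasing function of $\norm[2]{\res{\vc x}_{\st S}}$, so the top-$s$ support is optimal. You are also right that the scaling factor should be $\min\left\{ 1,r/\norm[2]{\vc x_{s}}\right\}$ rather than $\max$; the paper carries the same typo into its own proof but then computes the residual as $\left(\norm[2]{\res{\vc x}_{\st S}}-r\right)_{+}^{2}$, which is consistent only with the $\min$ version.
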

\begin{proof}
Given an index set $\st S\subseteq\left[n\right]$ we can write $\norm[2]{\vc x-\vc w}^{2}=\norm[2]{\res{\vc x-\vc w}_{\st S}}^{2}+\norm[2]{\res{\vc x}_{\st S\cmpl}}^{2}$
for vectors $\vc w$ with $\supp\left(\vc w\right)\subseteq\st S$.
Therefore, the solution to 
\begin{alignat*}{1}
\arg\min_{\vc w}\ \norm[2]{\vc x-\vc w} & \quad\text{s.t. }\norm[2]{\vc w}\leq r\text{ and }\supp\left(\vc w\right)\subseteq\st S
\end{alignat*}
 is simply obtained by projection of $\res{\vc x}_{\st S}$ onto the
sphere of radius $r$, i.e., 
\begin{alignat*}{1}
\mathrm{P}_{\st S}\left(\vc x\right) & =\max\left\{ 1,\frac{r}{\norm[2]{\res{\vc x}_{\st S}}}\right\} \res{\vc x}_{\st S}.
\end{alignat*}
 Therefore, to find a solution to (\ref{eq:BSProj}) it suffices to
find the index set $\st S$ with $\left|\st S\right|=s$ and thus
the corresponding $\mathrm{P}_{\st S}\left(\vc x\right)$ that minimize
$\norm[2]{\vc x-\mathrm{P}_{\st S}\left(\vc x\right)}$. Note that
we have 
\begin{alignat*}{1}
\norm[2]{\vc x-\mathrm{P}_{\st S}\left(\vc x\right)}^{2} & =\norm[2]{\res{\vc x}_{\st S}-\mathrm{P}_{\st S}\left(\vc x\right)}^{2}+\norm[2]{\res{\vc x}_{\st S\cmpl}}^{2}\displaybreak[0]\\
 & =\left(\norm[2]{\res{\vc x}_{\st S}}-r\right)_{+}^{2}+\norm[2]{\res{\vc x}_{\st S\cmpl}}^{2}\\
 & =\begin{cases}
\norm[2]{\vc x}^{2}-\norm[2]{\res{\vc x}_{\st S}}^{2} & \ ,\norm[2]{\res{\vc x}_{\st S}}<r\\
\norm[2]{\vc x}^{2}+r^{2}-2r\norm[2]{\res{\vc x}_{\st S}} & \ ,\norm[2]{\res{\vc x}_{\st S}}\geq r
\end{cases}.
\end{alignat*}
For all valid $\st S$ with $\norm[2]{\res{\vc x}_{\st S}}<r$ we
have $\norm[2]{\vc x}^{2}-\norm[2]{\res{\vc x}_{\st S}}^{2}>\norm[2]{\vc x}^{2}-r^{2}$.
Similarly, for all valid $\st S$ with $\norm[2]{\res{\vc x}_{\st S}}<r$
we have $\norm[2]{\vc x}^{2}+r^{2}-2r\norm[2]{\res{\vc x}_{\st S}}\leq\norm[2]{\vc x}^{2}-r^{2}$.
Furthermore, both $\norm[2]{\vc x}^{2}-\norm[2]{\res{\vc x}_{\st S}}^{2}$
and $\norm[2]{\vc x}^{2}+r^{2}-2r\norm[2]{\res{\vc x}_{\st S}}$ are
decreasing functions of $\norm[2]{\res{\vc x}_{\st S}}$. Therefore,
$\norm[2]{\vc x-\mathrm{P}_{\st S}\left(\vc x\right)}^{2}$ is a decreasing
function of $\norm[2]{\res{\vc x}_{\st S}}$. Hence, $\norm[2]{\vc x-\mathrm{P}_{\st S}\left(\vc x\right)}$
attains its minimum at $\st S=\supp\left(\vc x_{s}\right)$.
\end{proof}

\section{On non-convex formulation of \cite{Plan_Robust_2013}\label{apndx:PV0}}

\cite{Plan_Robust_2013} derived accuracy guarantees for 
\begin{alignat*}{1}
\arg\max_{\vc x}\ \left\langle \vc y,\mx A\vc x\right\rangle  & \quad\text{s.t. }\mbox{\ensuremath{\vc x\in\st K}}
\end{alignat*}
 as a solver for the 1-bit CS problem, where $\st K$ is a subset
of the unit Euclidean ball. While their result \cite[Theorem 1.1]{Plan_Robust_2013}
applies to both convex and non-convex sets $\st K$, the focus of
their work has been on the set $\st K$ that is the intersection of
a centered $\ell_{1}$-ball and the unit Euclidean ball. Our goal,
however, is to examine the other interesting choice of $\st K$, namely
the intersection of canonical sparse subspaces and the unit Euclidean
ball. The estimator in this case can be written as 
\begin{alignat}{1}
\arg\max_{\vc x}\ \left\langle \vc y,\mx A\vc x\right\rangle  & \quad\text{s.t. }\norm[0]{\vc x}\leq s\text{ and }\norm[2]{\vc x}\leq1.\label{eq:PVL0}
\end{alignat}
 We show that a solution to the optimization above can be obtained
explicitly.
\begin{lem}
A solution to (\ref{eq:PVL0}) is $\widehat{\vc x}=\left(\mx A\tran\vc y\right)_{s}/\norm[2]{\left(\mx A\tran\vc y\right)_{s}}$.\end{lem}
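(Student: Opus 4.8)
The plan is to reduce (\ref{eq:PVL0}) to a support-selection problem. First I would use $\left\langle \vc y,\mx A\vc x\right\rangle =\left\langle \mx A\tran\vc y,\vc x\right\rangle$ and abbreviate $\vc z:=\mx A\tran\vc y$, so that the problem becomes maximizing $\left\langle \vc z,\vc x\right\rangle$ over $\vc x$ with $\norm[0]{\vc x}\leq s$ and $\norm[2]{\vc x}\leq1$.

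Next I would separate the optimization over the support from the optimization over the values on that support. Fix an index set $\st S\subseteq\left[n\right]$ with $\left|\st S\right|\leq s$. For vectors $\vc x$ supported on $\st S$ with $\norm[2]{\vc x}\leq1$ we have $\left\langle \vc z,\vc x\right\rangle =\left\langle \res{\vc z}_{\st S},\res{\vc x}_{\st S}\right\rangle$, and by the Cauchy--Schwarz inequality this is at most $\norm[2]{\res{\vc z}_{\st S}}$, with equality when $\res{\vc x}_{\st S}=\res{\vc z}_{\st S}/\norm[2]{\res{\vc z}_{\st S}}$ (assuming $\res{\vc z}_{\st S}\neq\vc 0$). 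Hence the value of (\ref{eq:PVL0}) equals $\max_{\left|\st S\right|\leq s}\norm[2]{\res{\vc z}_{\st S}}$, and any $\st S$ attaining this maximum, together with the corresponding normalized restriction of $\vc z$, is a maximizer.

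Finally, since $\norm[2]{\res{\vc z}_{\st S}}^{2}=\sum_{i\in\st S}z_{i}^{2}$ is a sum of at most $s$ nonnegative terms drawn from $\left\{ z_{i}^{2}\right\} _{i\in\left[n\right]}$, it is maximized by choosing $\st S$ to index the $s$ entries of $\vc z$ largest in magnitude, i.e., $\st S=\supp\left(\vc z_{s}\right)$. For this choice $\res{\vc z}_{\st S}=\vc z_{s}$, so $\widehat{\vc x}=\vc z_{s}/\norm[2]{\vc z_{s}}=\left(\mx A\tran\vc y\right)_{s}/\norm[2]{\left(\mx A\tran\vc y\right)_{s}}$ is a solution, as claimed.

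I do not anticipate a genuine difficulty here; the argument is essentially Cauchy--Schwarz followed by a greedy selection of coordinates. The only points worth a remark are the degenerate cases: if $\mx A\tran\vc y=\vc 0$ every feasible point is optimal, and if several coordinates of $\vc z$ tie in magnitude the optimal support (hence the maximizer) need not be unique. Since the lemma only asserts that $\widehat{\vc x}$ is \emph{a} solution it suffices to exhibit this one, and whenever $\vc z_{s}\neq\vc 0$ the stated formula is well defined.
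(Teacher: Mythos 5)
Your proposal is correct and follows essentially the same route as the paper: rewrite $\left\langle \vc y,\mx A\vc x\right\rangle$ as $\left\langle \mx A\tran\vc y,\vc x\right\rangle$, apply Cauchy--Schwarz on a fixed support to get the value $\norm[2]{\res{\left(\mx A\tran\vc y\right)}_{\st S}}$, and then select the support of the $s$ largest-magnitude entries. Your explicit remarks on the degenerate cases ($\mx A\tran\vc y=\vc 0$ and ties) are a small but welcome addition beyond what the paper states.
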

\begin{proof}
For $\st I\subseteq\left[n\right]$ define 
\begin{alignat*}{1}
\widehat{\vc x}\left(\st I\right) & :=\arg\max_{\mx x}\ \left\langle \vc y,\mx A\vc x\right\rangle \quad\text{s.t. }\res{\vc x}_{\st I\cmpl}=0\text{ and }\norm[2]{\vc x}\leq1.
\end{alignat*}
Furthermore, choose 
\begin{alignat*}{1}
\widehat{\st I} & \in\arg\max_{\st I}\ \left\langle \vc y,\mx A\widehat{\vc x}\left(\st I\right)\right\rangle \quad\text{s.t. }\st I\subseteq\left[n\right]\text{ and }\left|\st I\right|\leq s.
\end{alignat*}
Then $\widehat{\vc x}\left(\widehat{\st I}\right)$ would be a solution
to (\ref{eq:PVL0}). Using the fact that $\left\langle \vc y,\mx A\vc x\right\rangle =\left\langle \mx A\tran\vc y,\vc x\right\rangle $,
straightforward application of Cauchy-Schwarz inequality shows that
$\widehat{\vc x}\left(\st I\right)=\res{\left(\mx A\tran\vc y\right)}_{\st I}/\norm[2]{\res{\left(\mx A\tran\vc y\right)}_{\st I}}$
for which we have 
\begin{alignat*}{1}
\left\langle \vc y,\mx A\widehat{\vc x}\left(\st I\right)\right\rangle  & =\norm[2]{\res{\left(\mx A\tran\vc y\right)}_{\st I}}.
\end{alignat*}
 Thus, we obtain $\widehat{\st I}=\supp\left(\left(\mx A\tran\vc y\right)_{s}\right)$
and thereby $\widehat{\vc x}\left(\widehat{\st I}\right)=\widehat{\vc x}$,
which proves the claim. 
\end{proof}
\bibliographystyle{IEEEtran}
\bibliography{references}

\end{document}